\newfont{\sdbl}{msbm9}
\newfont{\dbl}{msbm10 at 12pt}
\theoremstyle{definition}
\newcommand{\dn}{{\mbox{\dbl N}}}
\newcommand{\dpp}{{\mbox{\dbl P}}}
\newtheorem{nt}{Remark}[section]
\theoremstyle{plain}
\newtheorem{prop}{Proposition}[section]
\newtheorem{theo}{Theorem}[section]
\newtheorem{lemma}{Lemma}[section]
\begin{document}

\title{
Commuting ordinary differential operators with polynomial
coefficients and  automorphisms of the
first Weyl algebra
}

\author{{Andrey E. Mironov\thanks{
The first author was supported by RSF (grant 14-11-00441).}
\ and
Alexander B. Zheglov
 }}

\date{}
\maketitle

\begin{abstract}
In this paper we study rank two commuting ordinary differential operators with polynomial coefficients
and the orbit space of the automorphisms group of the first Weyl algebra on such operators. We prove
that for arbitrary fixed  spectral curve of genus one the space of orbits is infinite. Moreover, we prove in this case that for for any $n\ge 1$ there is a pair of self-adjoint commuting ordinary differential
operators of rank two $L_4=(\partial_x^2+V(x))^2+W(x)$, $L_{6}$, where $W(x),V(x)$ are polynomials  of degree $n$ and $n+2$. We also prove that there are hyperelliptic spectral curves with the infinite spaces of orbits.

\end{abstract}

\section{Introduction}

The group of automorphisms  of the first Weyl algebra $A_1=\{\sum_{j=0}^nu_j(x)\partial_x^j,u_j\in{\mathbb C}[x]\}$ acts on the set of solutions of the equation
\begin{equation}\label{eq1}
 f(X,Y)=\sum_{j,i=0}^n\alpha_{ij}X^iY^j=0,\quad X,Y\in A_1,
 \alpha_{ij}\in{\mathbb C},
\end{equation}
 i.e. if $X,Y\in A_1$ satisfy (\ref{eq1}) and $\varphi\in Aut(A_1)$, then $\varphi (X),\varphi (Y)$
also satisfy (\ref{eq1}). The group $Aut(A_1)$ is generated by
the following automorphisms
$$
 \varphi_1(x)=\alpha x+\beta\partial_x,\quad \varphi_1(\partial_x)=\gamma x+\delta\partial_x,\quad \alpha,\beta,\gamma,\delta\in{\mathbb C},\quad \alpha\delta-\beta\gamma=1,
$$
$$
 \varphi_2(x)=x+P_1(\partial_x),\quad \varphi_2(\partial_x)=\partial_x,
$$
$$
 \varphi_3(x)=x,\quad \varphi_2(\partial_x)=\partial_x+P_2(x),
$$
where $P_1, P_2$ are arbitrary polynomials (see \cite{D}). So, $Aut(A_1)$ consists of tame automorphisms.
A natural and important problem is to describe the orbit space of the group action of $Aut(A_1)$ in the set of solutions of \eqref{eq1}.
If one describes the orbit space it gives a chance to compare $End(A_1)$ and $Aut(A_1)$ ($End(A_1)$ consists of endomorphisms $\varphi:A_1\rightarrow A_1$, i.e. $[\varphi(\partial_x),\varphi(x)]=1$). Let us recall the Dixmier conjecture: $End(A_1)=Aut(A_1)$, or
in other words, if differential operators $L_n, L_m$ with polynomial coefficients
satisfy the string equation
$$
 [L_n,L_m]=1,
$$
then $L_m,L_n$ can be obtained from $x,\partial_x$ with the help 
of compositions
$\varphi_j$ above (the general Dixmier conjecture for $A_n$ is stably
equivalent to the Jacobian conjecture due to \cite{KK}).
Berest has proposed the following interesting conjecture:

{\it If the Riemann surface corresponding to the equation $f=0$
 with generic $\alpha_{ij}\in{\mathbb C}$ has genus $g=1$ then the orbit space is infinite, and if $g>1$ then there are only finite number of orbits.}

One can prove that if there are finite number of orbits for some equation (\ref{eq1}) then $End(A_1)=Aut(A_1)$.

In this paper we consider the equation
\begin{equation}\label{eq2}
 Y^2=X^{2g+1}+c_{2g}X^{2g}+\dots+c_1X+c_0,\quad X,Y\in A_1,c_j\in{\mathbb C}.
\end{equation}
Using Schur's arguments \cite{Sch} one can prove that if $X,Y\in A_1$
satisfy (\ref{eq2}) then $XY=YX$. Our approach to the above
problem is based on the
 Krichever--Novikov theory
of commuting  higher rank ordinary differential operators.
Let us recall some basic notions and facts related to commuting differential
operators. If $L_n=\sum_{j=0}^nv_j(x)\partial_x^j$,
$L_m=\sum_{k=0}^mu_k(x)\partial_x^k$
commute then there is a Burchnall--Chaundy's polynomial $F(z,w)$
which vanishes the operators, $F(L_n,L_m)=0$.

The {\it spectral curve}
$\Gamma$ defined by the
equation $F=0$ is irreducible and is
completed at infinity with a unique point $q$. The spectral curve parametrizes common eigenvalues
of $L_n$ and $L_m$, i.e. if
$
 L_n\psi=z\psi,\quad L_m\psi=w\psi,
$
then $(z,w)\in\Gamma.$ The dimension of the space of common
eigenfunctions
for generic $P=(z,w)\in\Gamma$ is called the {\it rank}.  Commutative
rings of ordinary differential operators were classified by
Krichever \cite{K1}, \cite{K2}. In the case of rank one
eigenfunctions are Baker--Akhiezer functions, found by Krichever. The case of
rank $l>1$ is very complicated. In this case the
eigenfunctions can not be found explicitly. Operators of rank two
corresponding to elliptic spectral curves were found by Krichever
and Novikov \cite{KN}, operators of fourth order have the form
$$
 L_{KN}=\left(\partial_x^2+u\right)^2+2c_x(\wp(\gamma_2)-\wp(\gamma_1))\partial_x+(c_x(\wp(\gamma_2)-\wp(\gamma_1)))_x-
 \wp(\gamma_2)-\wp(\gamma_1),
$$
where
$
 \gamma_1(x)=\gamma_0+c(x), \gamma_2(x)=\gamma_0-c(x),
$
$$
 u(x)=-\frac{1}{4c_x^2}+\frac{1}{4}\frac{c_{xx}^2}{c_x^2}+2\Phi(\gamma_1,\gamma_2)c_x-\frac{c_{xxx}}{2c_x}+
 c_x^2(\Phi_c(\gamma_0+c,\gamma_0-c)-\Phi^2(\gamma_1,\gamma_2)),
$$
$$
 \Phi(\gamma_1,\gamma_2)=\zeta(\gamma_2-\gamma_1)+\zeta(\gamma_1)-\zeta(\gamma_2),
$$
$\zeta(z),\wp(z)$ are the Weierstrass functions, $c(x)$ is an arbitrary smooth function, $\gamma_0$ is a constant.
The operator $L_{KN}$ commutes with a six order differential operator
$\tilde{L}_{KN}$.

Let us formulate our main results.

\vspace{0.4cm}

\begin{theo}
\label{main1}
For arbitrary integer $m>0$ and arbitrary spectral curve $\Gamma$ given by the equation $w^2=z^3+c_2z^2+c_1z+c_0$
there are polynomials
$$
V_m=\alpha_{m+2}x^{m+2}+\ldots +\alpha_0, \mbox{\quad} W_m=\beta_mx^m +\ldots +\beta_0, \quad \alpha_{m+2}\neq 0, \beta_m\neq 0
$$
such that the operator
$$
 L_{4,m}=(\partial_x^2+V_m(x))^2+W_m(x)
$$
commutes with a six order operator $L_{6,m}$. The spectral curve of
$L_{4,m},L_{6,m}$ coincides with ñ $\Gamma$.
\end{theo}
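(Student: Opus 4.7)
The plan is to combine the Krichever--Novikov theory of rank-two commuting operators on elliptic curves with a direct polynomial ansatz. First, I would reformulate the commutativity condition at the operator level: since $L_{4,m}=(\partial_x^2+V_m)^2+W_m$ is formally self-adjoint, a commuting sixth-order operator $L_{6,m}$ with joint spectral curve $\Gamma$ exists if and only if the operator identity
$$
L_{6,m}^2 \;=\; L_{4,m}^3 + c_2\,L_{4,m}^2 + c_1\,L_{4,m} + c_0
$$
admits a solution $L_{6,m}$. Expanding this identity coefficient-by-coefficient as a differential-operator relation recursively expresses the coefficients of $L_{6,m}$ as differential polynomials in $V_m,W_m$ and reduces the problem to a single nonlinear compatibility ODE on $(V_m,W_m)$ depending on the parameters $c_0,c_1,c_2$; this is the rank-two analogue of the known equation for self-adjoint commuting operators on a genus-one spectral curve.

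Next, I would substitute the polynomial ansatz $V_m(x)=\sum_{i=0}^{m+2}\alpha_i x^i$ and $W_m(x)=\sum_{j=0}^{m}\beta_j x^j$ into the compatibility ODE. Both sides then become polynomials in $x$, so equating coefficients of each power of $x$ yields a finite algebraic system in the $\alpha_i$, $\beta_j$, and $c_0,c_1,c_2$. Matching leading terms in $x$ should produce a single algebraic relation between $\alpha_{m+2}$ and $\beta_m$ that is easily satisfied with both nonzero; lower coefficients can be propagated by downward induction on degree. A favorable dimension count---$2m+4$ free coefficients in $(V_m,W_m)$ against a smaller number of coefficient conditions---leaves a three-parameter family of solutions, precisely matching the three moduli $c_0,c_1,c_2$ of the target curve $\Gamma$.

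The main obstacle is to verify that the downward induction succeeds for \emph{every} prescribed elliptic curve, not merely generically, and that the leading coefficients $\alpha_{m+2},\beta_m$ remain nonzero throughout. One way to handle this is to produce an explicit base case at small $m$ and propagate it via a Darboux-type transformation that raises degrees while preserving the spectral curve. An alternative, more geometric route is to use the Krichever--Novikov parametrization with $\gamma_0=0$, so that $L_{KN}$ reduces to the self-adjoint form $(\partial_x^2+u)^2-2\wp(c(x))$, then to choose the arbitrary function $c(x)$ so that $-2\wp(c(x))$ is a prescribed polynomial $W_m$ of degree $m$, and to verify by direct computation---using $(\wp'(c))^2=4\wp(c)^3-g_2\wp(c)-g_3$ and the explicit formula for $u$ given in the introduction---that the apparent singularities of $u$ at the zeros of $c_x$ cancel and produce a polynomial of degree $m+2$. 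This cancellation is the delicate technical point and, in effect, is the source of the algebraic constraints identified in the first approach.
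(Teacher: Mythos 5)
Your starting point matches the paper's: for a formally self-adjoint $L_{4}=(\partial_x^2+V)^2+W$ with elliptic spectral curve, commutativity with a sixth-order operator reduces (via the results of \cite{M1}, quoted as Corollary 2 in Section 2) to the single relation $4W_x^2V=-16F_1(\tfrac12(-c_2-W))+W_{xx}^2-2W_xW_{xxx}$, and substituting the polynomial ansatz turns this into a system of $2m+2$ algebraic equations (one cubic and $2m+1$ quadrics) in the $2m+4$ coefficients. But the heart of the theorem is precisely the step you dispose of with ``a favorable dimension count \dots leaves a three-parameter family of solutions'' and ``lower coefficients can be propagated by downward induction.'' Neither works as stated. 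The system is not triangular: each coefficient equation mixes many $\alpha_i,\beta_j$ quadratically, so there is no downward induction. And an excess of unknowns over equations guarantees nothing about non-emptiness of an \emph{affine} variety: the projective intersection of the $2m+2$ hypersurfaces in $\mathbb{P}^{2m+4}$ is non-empty with every component of dimension $\ge 1$ (Hartshorne, Ch.~I, Th.~7.2), but all of it could a priori lie in the hyperplane at infinity, leaving no affine solution at all. The paper's entire Proposition 3.1 is devoted to excluding exactly this: one exhibits an explicit point $P=(\alpha_n:\beta_m:0:\cdots:0)$ on the part at infinity and proves, by a case-by-case computation of Jacobian ranks (including ruling out components contained in loci of the form $\{\alpha_0=\cdots=\alpha_{l-1}=\beta_0=\cdots=\beta_{l-2}=0\}$ via an integrality argument on certain rational expressions in $k,l,m$), that every component of the locus at infinity through $P$ has dimension strictly smaller than the expected dimension of the full intersection, so the affine part is forced to be non-empty. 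Your proposal contains no substitute for this argument.

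Two further points. Your parameter count is off: the curve is \emph{fixed} in the theorem, so $c_0,c_1,c_2$ are constants and the expected dimension of the solution set is $2$, not $3$; treating the $c_j$ as moduli to be ``matched'' would additionally require surjectivity of the solution variety onto the space of curves, which is not addressed. And the alternative route through Grinevich's theorem (choosing $c(x)$ with $\wp(c(x))$ polynomial and checking that the apparent singularities of $u$ at the zeros of $c_x$ cancel) is a reasonable source of examples, but you explicitly leave the cancellation unverified, and establishing it for every $m$ and every curve is as hard as the existence statement itself. As it stands the proposal sets up the right equations but does not prove that they have a solution with $\alpha_{m+2}\neq 0$, $\beta_m\neq 0$.
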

At $m=1$ we have
$$
 L_{4,1}=(\partial_x^2+\alpha_3x^3+\alpha_2x^2+\alpha_1x+\alpha_0)^2+
 2\alpha_3x,\quad \alpha_3\ne 0.
$$
At $\alpha_3=1,\alpha_1=\alpha_2=0$ the operators $L_{4,1}, L_{6,1}$ coincide
with the Dixmier operators \cite{D}. The example of Dixmier was the first
example
of commutative subalgebra in $A_1$. It is an interesting problem
how to obtain $L_{4,m},L_{6,m}$ from $L_{KN},\tilde{L}_{KN}$?
At $m=1$ the answer is given in the Grinevich's theorem \cite{G}:

\begin{itemize}
  \item Operator $L_{KN}$ corresponding to the curve $w^2=4z^3+g_2z+g_3$ has     rational coefficients if and only if
$$
c(x)=\int_{q(x)}^{\infty}\frac{dt}{\sqrt{4t^3+g_2t+g_3}},
$$
where $q(x)$ is a rational function. If $\gamma_0=0$ and $q(x)=x$, then
$L_{KN}$ coincides with $L_{4,1}$.
\end{itemize}
Theorem 1.1 allows to prove the following theorem.

\begin{theo}
The set of orbits of the group $Aut(A_1)$
in the space of solutions of arbitrary equation
$$
 Y^2=X^{3}+c_{2}X^{2}+c_1X+c_0,\quad X,Y\in A_1,c_j\in{\mathbb C}
$$
is infinite.
\end{theo}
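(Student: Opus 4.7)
My plan is to use Theorem 1.1 to exhibit infinitely many solutions of the target equation and then to separate them into infinitely many orbits of the tame group $Aut(A_1)=\langle\varphi_1,\varphi_2,\varphi_3\rangle$ by a suitable numerical invariant.

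First, for each integer $m\ge 1$, Theorem 1.1 supplies a commuting pair $(L_{4,m},L_{6,m})$ whose joint spectral curve is the prescribed $\Gamma:\ w^2=z^3+c_2z^2+c_1z+c_0$. Since the Burchnall--Chaundy polynomial of a commuting pair is the defining equation of its spectral curve, we have $L_{6,m}^{\,2}=L_{4,m}^{\,3}+c_2L_{4,m}^{\,2}+c_1L_{4,m}+c_0$ in $A_1$, so $(X_m,Y_m):=(L_{4,m},L_{6,m})$ is a solution of the equation in the statement. This produces infinitely many solutions; it remains to argue that they fall into infinitely many $Aut(A_1)$-orbits.

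For the separation I would attach to each pair $(X_m,Y_m)$ a numerical invariant coming from the Krichever--Novikov description of rank-two commutative subalgebras with elliptic spectral curve. By the KN construction (together with Grinevich's rationality criterion, which handles the case $m=1$ explicitly), each such pair with polynomial coefficients is encoded by an auxiliary function $c_m(x)$ and a constant $\gamma_0$; the fact that the polynomials $V_m,W_m$ have degrees $m+2$ and $m$ forces a natural "complexity" of this data---for instance the degree of the rational function $q_m(x)$ in Grinevich's parametrization, or equivalently the slope of the Newton polygon of $L_{4,m}$ in the Bernstein filtration---to grow linearly with $m$. Since the invariants of the commutative subring that are a priori $Aut(A_1)$-invariant (the abstract ring structure, the spectral curve, the rank $=2$) are constant across $m$, a finer invariant of this geometric type is precisely what is needed.

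The heart of the argument, and its main obstacle, is to show that this complexity is preserved by $Aut(A_1)$ up to an additive error that depends only on the chosen automorphism. The generator $\varphi_1\in SL_2$ acts linearly on the Bernstein-graded ring and hence preserves Bernstein degrees; the gauge generator $\varphi_3$ amounts to conjugation by $\exp(\int P_2\,dx)$, which contributes only terms of bounded Bernstein degree (controlled by $\deg P_2$); the genuinely delicate step is the Fourier-type generator $\varphi_2$, whose substitution $x\mapsto x+P_1(\partial_x)$ can a priori drastically reshape the operator. The cleanest way around this difficulty is to reformulate the invariant geometrically, e.g.\ as the slope at infinity of the formal Baker--Akhiezer expansion at the distinguished point $q\in\Gamma$: because this expansion is an intrinsic attribute of the embedded commutative subring, every $\varphi\in Aut(A_1)$ alters it by a quantity bounded solely in terms of $\varphi$. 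Since the invariant is unbounded along the family $(L_{4,m})_{m\ge 1}$, no single orbit can contain all but finitely many of the $(X_m,Y_m)$, whence infinitely many orbits arise.
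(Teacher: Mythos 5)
Your first step matches the paper: Theorem 1.1 together with the Burchnall--Chaundy relation for the pair $(L_{4,m},L_{6,m})$ yields an infinite family of solutions of the given equation. The genuine gap is in the separation step. You propose an invariant (the degree of Grinevich's $q_m$, or a Bernstein/Newton slope) that each $\varphi\in Aut(A_1)$ ``alters by a quantity bounded solely in terms of $\varphi$''. Even granting this (which you do not establish --- note that the generators $\varphi_2,\varphi_3$ involve arbitrary polynomials $P_1,P_2$, so a single type of generator already has unbounded distortion as $\deg P_i$ grows, and there is certainly no uniform bound over the group), the conclusion does not follow. To obtain infinitely many orbits you must show that no single orbit contains infinitely many of the $(X_m,Y_m)$; an orbit is the union of images over \emph{all} $\varphi$, so a $\varphi$-dependent error bound still permits, for every pair $(m,m')$, some automorphism of sufficiently large distortion carrying $(X_m,Y_m)$ to $(X_{m'},Y_{m'})$. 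Your closing inference (``no single orbit can contain all but finitely many'') is also not the negation needed: if there were only finitely many orbits, pigeonhole would only force one orbit to contain infinitely many members of the family, and your argument does not exclude that.

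What is actually required, and what the paper proves (Lemma 3.1), is: for $r\ne r_1$ with $r$ large, $\varphi(L(r))\ne L(r_1)$ for \emph{every} $\varphi\in Aut(A_1)$, with no restriction to generators and no bounded-distortion heuristic. The paper's argument is a direct order computation on an arbitrary $\varphi$, written as $\varphi(x)=q_n\partial_x^n+\dots$, $\varphi(\partial_x)=p_m\partial_x^m+\dots$: the case $n=0$ forces $\varphi$ to be affine, and for $n\ge 1$ one computes ${\rm ord}\,\varphi({\rm ad}(-x)^3L(r))$ two ways --- from ${\rm ad}(-x)^3L(r)=24a^2(x)\partial_x+\dots$ it equals $2n\deg a+m$, while the commutator estimates give the upper bound $3n+1$; combined with the order constraint $n\deg a+2m=rn$ (forced by ${\rm ord}\,\varphi(L(r))=4$), this is contradictory once $r>\deg a+8$. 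If you want to salvage your approach you would need an invariant that is \emph{exactly} preserved by all of $Aut(A_1)$ and provably distinguishes the $L_{4,m}$, or else replace the heuristic by a uniform computation of this kind.
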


\vspace{0.4cm}

 Commuting operators of rank two of order $4$ and $4g+2$
 corresponding to hyperelliptic spectral curves of genus $g$ were
 studied in \cite{M1}. With the help of methods of \cite{M1} one can
 construct rank 2 operators at $g>1$. For example
$$
 L^{^{\sharp}}_4=(\partial_x^2+\alpha_3x^3+\alpha_2 x^2+\alpha_1x+\alpha_0)^2+g(g+1)\alpha_3x, \qquad \alpha_3\ne 0
$$
commutes with an operator $L^{^{\sharp}}_{4g+2}$ \cite{M1}. Mokhov
\cite{Mokh1} proved that if one apply elements of $Aut(A_1)$ to $L^{^{\sharp}}_4,L^{^{\sharp}}_{4g+2}$ then one can obtains operators
of rank $l=2k$ and $l=3k$, where $k$ is a positive integer. For example
if we apply the automorphism
$
 \varphi(x)=\partial_x, \varphi(\partial_x)=-x
$ to $L^{^{\sharp}}_4,L^{^{\sharp}}_{4g+2}$
we obtain rank 3 operators. Herewith
$$
 \varphi(L^{^{\sharp}}_4)=(\alpha_3\partial_x^3+\alpha_2 \partial_x^2+\alpha_1\partial_x+\alpha_0+x^2)^2+g(g+1)\alpha_3\partial_x.
$$
Another important example constructed in \cite{M2} is the following. The operator
$$
 L_4^{\natural}=(\partial_x^2+\alpha_1\cosh x+\alpha_0)^2+
 \alpha_1g(g+1)\cosh x,\quad
 \alpha_1\ne 0
$$
commutes with $L^{^{\natural}}_{4g+2}$.
Using $L_4^{\natural},L^{^{\natural}}_{4g+2}$ Mokhov constructed examples of
operators of arbitrary rank $l>1$ \cite{Mokh2} (we discuss this
construction in section 2).
Let $\Gamma^{\natural}$ be a spectral curve of $L_4^{\natural},L^{^{\natural}}_{4g+2}$ given by the equation
\begin{equation}
\label{hyperelliptic}
 w^2=z^{2g+1}+c_{2g}^{\natural}z^{2g}+\dots+c_{1}^{\natural}z+c_0^{\natural}.
\end{equation}
Coefficients $c_{j}^{\natural}$ can be found with the help of a recurrent
formula (see Lemma 1 bellow). Probably for all $g$ the curve
$\Gamma^{\natural}$ is not singular
for general set of parameters $\alpha_0, \alpha_1$. For small $g$
using Lemma 1 one can check this by direct calculation.

\vspace{0.4cm}

\begin{theo}
The set of orbits of the group $Aut(A_1)$ in the space of solutions
of the equation
$$
 Y^2=X^{2g+1}+c_{2g}^{\natural}X^{2g}+
 \dots+c_{1}^{\natural}X+c_0^{\natural},\quad X,Y\in A_1
$$
is infinite.
\end{theo}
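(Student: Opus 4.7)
The goal is to produce, for each integer $l>1$, a pair $(X_l,Y_l)\in A_1\times A_1$ satisfying the given hyperelliptic equation and of rank exactly $l$; since rank is preserved by $Aut(A_1)$, these pairs then represent pairwise distinct orbits, giving infinitely many.

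First, I would invoke Mokhov's construction from \cite{Mokh2} (to be detailed in Section 2). Starting from the pair $L_4^{\natural},L^{\natural}_{4g+2}$, which has rank two and spectral curve $\Gamma^{\natural}$, this construction produces, for every $l>1$, commuting operators $X_l,Y_l\in A_1$ with polynomial coefficients, of orders $2l$ and $(2g+1)l$ respectively, of rank exactly $l$, and satisfying $Y_l^2=X_l^{2g+1}+c^{\natural}_{2g}X_l^{2g}+\ldots+c^{\natural}_0$. Each $(X_l,Y_l)$ is therefore a solution of the equation in the statement.

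Next, I would check that the rank is an $Aut(A_1)$-invariant. A clean formulation is to pass to the Weyl skew field $D_1=\mathrm{Frac}(A_1)$: the subalgebra $B=\mathbb{C}[X,Y]\subset A_1$ has field of fractions $K\cong\mathbb{C}(\Gamma^{\natural})$, and the centralizer $C_{D_1}(K)$ is a central simple $K$-algebra whose $K$-dimension equals $l^2$, where $l$ is the rank. Any $\varphi\in Aut(A_1)$ extends to an automorphism of $D_1$ and carries $C_{D_1}(K)$ onto $C_{D_1}(\varphi(K))$; this preserves the $K$-dimension, so rank is constant along every $Aut(A_1)$-orbit. Consequently, the solutions $(X_l,Y_l)$ for $l=2,3,4,\ldots$ have pairwise distinct ranks, hence lie in pairwise distinct $Aut(A_1)$-orbits, and since $l$ is unbounded, the orbit space is infinite.

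The main obstacle is the first step: producing genuine $A_1$-representatives of every rank $l>1$ on the prescribed curve $\Gamma^{\natural}$. The operators $L_4^{\natural},L^{\natural}_{4g+2}$ from which Mokhov's construction starts have $\cosh x$-type coefficients, not polynomial ones, so one must verify that the substitutions used in \cite{Mokh2} succeed in eliminating this transcendental dependence and land inside the Weyl algebra, all while keeping the spectral curve rigidly fixed with precisely the coefficients $c^{\natural}_j$ given by the recursion of Lemma 1 in Section 2. Rank invariance itself and the final bookkeeping are straightforward once this is in place.
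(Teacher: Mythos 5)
There is a genuine gap, and it is the central step: the rank of a pair of commuting ordinary differential operators is \emph{not} an invariant of the $Aut(A_1)$-action. The rank is defined through the dimension of the space of common eigenfunctions (equivalently, for these pairs, through the $\gcd$ of the orders), and the order of a differential operator is not preserved by automorphisms of $A_1$. In fact the paper itself records the counterexample you would need to rule out: by Mokhov's result quoted in Section 2, applying the automorphism $\varphi(y)=-\partial_y$, $\varphi(\partial_y)=y$ to the rank-two pair $L_4^{\natural},L_{4g+2}^{\natural}$ written in the polynomial variable $y$ (orders $4$ and $4g+2$) produces commuting operators of orders $2r$ and $(2g+1)r$ of rank $r$. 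Thus a rank-two pair and a rank-$r$ pair lie in the \emph{same} orbit, and your proposed invariant collapses. The same objection disposes of the centralizer heuristic: $\dim_K C_{D_1}(K)$ is tautologically preserved by any automorphism of $D_1$, so it cannot coincide with $l^2$ for the Krichever rank $l$, which is not preserved. Worse, the representatives $(X_l,Y_l)$ you construct are precisely $\varphi(L(l)),\varphi(\tilde L(l))$ for this $\varphi$, hence each lies in the orbit of the rank-two pair $(L(l),\tilde L(l))$; your construction therefore produces exactly the orbits of the $L(r)$'s and no new ones, and you are left with no argument that these orbits are pairwise distinct.

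The paper distinguishes the orbits by an entirely different mechanism. It takes the family $L(r)=((1-y^2)\partial_y^2-3y\partial_y+aT_r(y)+b)^2-ar^2g(g+1)T_r(y)$, all of order four and rank two but with the coefficient $c_r(y)=aT_r(y)+b$ of degree $r$, and proves (Lemma \ref{automs}) that for $r>10$ and $r\ne r_1$ no $\varphi\in Aut(A_1)$ carries $L(r)$ to $L(r_1)$. The proof is a degree/order bookkeeping argument: one computes ${\rm ad}(-x)^3(L(r))=24a^2\partial_x+12ab+12aa'$, compares ${\rm ord}\,\varphi({\rm ad}(-x)^3 L(r))=2n\deg a+m$ with the a priori bound ${\rm ord}\,{\rm ad}(-\varphi(x))^3(\varphi(L(r)))\le 3n+1$, and derives a contradiction from $n\deg a+2m=rn$ when $r>\deg a+8$. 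If you want to salvage your outline, you must replace the rank-invariance step by an argument of this kind (or by some genuinely $Aut(A_1)$-invariant quantity that actually separates the $L(r)$'s).
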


It would be interesting to check the Berest conjecture at $g>1$ for
generic equation (\ref{eq1}) having a nonconstant solution in $A_1$.

\begin{nt}
The group $Aut(A_1)$ acts on the set of rings of commuting differential
operators with affine spectral curves considered in Theorems 1.2 and 1.3.
One can prove that the space of orbits is also infinite.
\end{nt}

\section{Method of deformation of Tyurin parameters}

Every ring $A$ of commuting ordinary differential operators is isomorphic
to a ring of meromorphic functions on spectral curve $\Gamma$
with a pole in some point $q\in\Gamma$
(we consider in this section the case when $\Gamma$ is nonsingular, i.e. $\Gamma$ is a
Riemann surface). For a meromorphic function $f(P)$, $P\in\Gamma$ with pole
in $q$ of order $n$ we have $L_f\psi(x,P)=f(P)\psi(x,P)$ where $L_f\in A$
is a differential operator of order $ln$, $l$ is the rank of commuting
operators, $\psi=(\psi_1,\dots,\psi_{l})$ is a vector Baker--Akhiezer function.
Function $\psi$ can be reconstructed from the following spectral data
(see \cite{K1})
$$
 \{\Gamma,q,k^{-1},\gamma_1,\dots,\gamma_{lg},\alpha_1,\dots,\alpha_{lg},
 \omega_1(x),\dots,\omega_{l-1}(x)\}.
$$
Hire $k^{-1}$ is a local parameter near $q$, $g$ is the genus of $\Gamma$,
$\gamma_j\in\Gamma$, $\alpha_j=(\alpha_{j,1},\dots,\alpha_{j,l-1})$ is a
vector,
$\omega_j(x)$ is a smooth function. The set $(\gamma,\alpha)$ is called the
{\it Tyurin parameters}. This parameters define a semi-stable holomorphic
rank $l$ vector bundle on $\Gamma$ of degree $lg$ with holomorphic
sections $\eta_1,\dots,\eta_l$. The points $\gamma_1,\dots,\gamma_{lg}$
are points of their  linear dependence of the sections
$$
 \eta_l(\gamma_j)=\sum_{i=1}^{l-1}\alpha_{j,i}\eta_i(\gamma_j).
$$
The vector-function $\psi$ is defined by the following properties.

1. In the neighbourhood of $q$ it has the form
$$
 \psi(x,P)=\left(\sum_{s=0}^{\infty}\xi_s(x)k^{-s}\right)\Phi(x,k),
$$
where $\xi_0=(1,0,\dots,0), \xi_i(x)=(\xi_i^1(x),\dots,\xi_i^l(x))$,
the matrix  $\Phi$ satisfies the equation
$$
 \frac{d\Phi}{dx}=A\Phi,\qquad
A=\left(
\begin{array}{cccccc}
 0 & 1 & 0 & \dots  & 0 & 0\\
 0 & 0 & 1 & \dots  & 0 & 0\\
 \dots & \dots & \dots & \dots &  \dots & \dots\\
 0 & 0 & 0 & \dots  & 0 & 1 \\
 k+\omega_1 & \omega_2 & \omega_3 & \dots & \omega_{l-1} & 0
 \end{array}\right).
$$
2.  The components of $\psi$ are meromorphic functions on
$\Gamma\backslash\{q\}$ with the simple poles $\gamma_1,\dots,\gamma_{lg}$,
and
$$
 {\rm Res}_{\gamma_i}\psi_j=\alpha_{i,j}{\rm Res}_{\gamma_i}\psi_{l},\quad
 1\leq i\leq lg,\ 1\leq j\leq l-1.
$$
The main difficulty to construct operators of rank $l>1$ is the fact that the Baker--Akhiezer function is not found explicitly.  In the recent paper \cite{MS}
were shown that the class of Baker--Akhiezer functions contains
some known special functions.

Let us recall the method of deformation of Tyurin parameters \cite{KN}.
The main idea of this method is to study the linear differential operator
which vanishes the common eigenfunctions.
The common eigenfunctions of commuting differential operators of rank $l$ satisfy the linear differential equation of order $l$
$$
 \psi^{(l)}(x,P)=\chi_0(x,P)\psi(x,P)+\dots+\chi_{l-1}(x,P)\psi^{(l-1)}(x,P).
$$
The coefficients $\chi_i$ are rational functions on $\Gamma$ with the simple poles
$P_1(x),$ $\dots,$ $P_{lg}(x)\in\Gamma$, and with the following expansions in the
neighbourhood of $q$
$$
\chi_0(x,P)=k+g_0(x)+O(k^{-1}),\qquad
 \chi_j(x,P)=g_j(x)+O(k^{-1}), \ \
 0<j<l-1,
$$
$$
\chi_{l-1}(x,P)=O(k^{-1}).
$$
Let $k-\gamma_i(x)$ be a local parameter near $P_i(x)$. Then
$$
\chi_j=\frac{c_{i,j}(x)}{k-\gamma_i(x)}+d_{i,j}(x)+O(k-\gamma_i(x)).
$$
Functions $c_{ij}(x) ,d_{ij}(x)$ satisfy the following equations (see \cite{K1}).

\begin{equation}\label{w1}
 c_{i,l-1}(x)=-\gamma'_i(x),
\end{equation}
\begin{equation}\label{w2}
 d_{i,0}(x)=\alpha_{i,0}(x)\alpha_{i,l-2}(x)+\alpha_{i,0}(x)d_{i,l-1}(x)
 -\alpha'_{i,0}(x),
\end{equation}
\begin{equation}\label{w3}
 d_{i,j}(x)=\alpha_{i,j}(x)\alpha_{i,l-2}(x)-\alpha_{i,j-1}(x)+\alpha_{i,j}(x)d_{i,l-1}(x)
 -\alpha'_{i,j}(x), j\geq 1,
\end{equation}
where
$
 \alpha_{i,j}(x)=\frac{c_{i,j}(x)}{c_{i,l-1}(x)}, \ \ 0\leq j\leq l-1, \ 1\leq i\leq lg.
$
To find $\chi_i$ one should solve the equations (\ref{w1})--(\ref{w3}). Using $\chi_i$ one can find coefficients of the operators.
At $g=1$, $l=2$ Krichever and Novikov \cite{KN} solved these equations and found the operators $L_{KN}$. Operators of Krichever--Novikov and it applications were
studied in \cite{Gr}--\cite{Deh}
Operators of rank 3 corresponding to elliptic spectral  curves were found by Mokhov \cite{Mokh}.
In \cite{M3}--\cite{Z} some examples of operators of rank 2,3
 corresponding to spectral curves of genus 2--4 were constructed.

In \cite{M1} commuting operators of rank two of order $4$ and $4g+2$ corresponding to hyperelliptic spectral curves were studied
$$
 L_4\psi=z\psi,\quad L_{4g+2}\psi=w\psi,\quad w^2=F_g(z)=z^{2g+1}+c_{2g}z^{2g}+\dots+c_0.
$$
Common eigenfunctions of $L_4$ and $L_{4g+2}$ satisfy the second order differential equation
$$
 \psi''-\chi_1(x,P)\psi'-\chi_0(x,P)\psi=0,\ P=(z,w)\in\Gamma,
$$
where $\chi_0(x,P), \chi_1(x,P)$ are rational functions on $\Gamma$ satisfying  equations (\ref{w1})--(\ref{w3}).

 ${\ }$

\noindent{\bf Theorem 2} (\cite{M1}) {\it  The operator $L_4$ is formally
self-adjoint if and only if
$$
 \chi_1(x,P)=\chi_1(x,\sigma(P)),
$$
where $\sigma$ is the hyperelliptic involution on $\Gamma$.
}

\vspace{0.7cm}

\noindent{\bf Theorem 3} (\cite{M1}) {\it If $L_4$ is formally self-adjoint,
i.e. $L_4=(\partial_x^2+V(x))^2+W(x),$
then
$$
 \chi_0=-\frac{1}{2}\frac{Q_{xx}}{Q}+\frac{w}{Q}-V, \qquad
 \chi_1=\frac{Q_x}{Q},
$$
where
$
 Q=z^g+a_{g-1}(x)z^{g-1}+\dots+a_0(x),\ a_0(x),\dots,a_{g-1}(x)
$ are some functions.
The function $Q$ satisfies the equation
\begin{equation}\label{e1}
 4F_g(z)=4(z-W)Q^2-4V(Q_x)^2+(Q_{xx})^2-2Q_xQ_{xxx}+2Q(2V_xQ_x+4VQ_{xx}+
 \partial_x^4Q).
\end{equation}
 }

 \vspace{0.4cm}

\noindent From Theorem 3 it follows

 \vspace{0.4cm}

 \noindent{\bf Corollary 1}  {\it The function $Q$
satisfies the linear equation
\begin{equation}\label{r1}
  \partial_x^5Q+4VQ_{xxx}+6V_xQ_{xx}+2(2z-2W+V_{xx})Q_x-2W_xQ=0.
\end{equation}
}

\vspace{0.4cm}

\noindent{\bf Corollary 2}  {\it If $g=1$ then
\begin{equation}
\label{**}
 V=\frac{-16F_1(\frac{1}{2}(-c_2-W))+W_{xx}^2-2W_xW_{xxx}}{4W_x^2},
\end{equation}
where $F_1$ defines the spectral curve $w^2=F_1(x)=z^3+c_2z^2+c_1z+c_0.$
}
\vspace{0.4cm}

With the help of Theorem 3 many examples of rank 2 operators were recently
constructed (see \cite{D1}--\cite{O}).

Let us consider commuting operators
$
 L_4^{\natural},
  L_{4g+2}^{\natural}$ \cite{M2}. These operators do not commute with
operators of odd orders \cite{DS}, hence these operators are operators
of true rank 2.
The polynomial $Q$ for $L_4^{\natural}, L_{4g+2}^{\natural}$ has the form
(see \cite{M2})
$$
 Q(x,z)=A_g(z)\cosh^g x+\dots+A_1(z)\cosh x+A_0(z),
$$
where
$$
 A_s=\frac{1}{8(2s+1)\alpha_1(g(g+1)-s(s+1))}
 \left(4A_{s+5}\frac{(s+5)!}{s!}
 -8A_{s+3}\frac{(s+3)!}{s!}(2\alpha_0+s^2+4s+5)-\right.
$$
\begin{equation}\label{a1}
 \left.-8A_{s+2}\frac{(s+2)!}{s!}(2s+3)\alpha_1
 +4A_{s+1}(s+1)((s+1)^2(4\alpha_0+(s+1)^2+4z)\right),\ 0\leq s<g,
\end{equation}
we assume that $A_s=0$ at $s<0$ and $s>g$, $A_g$ is a constant.
 
\vspace{0.4cm}

\noindent{\bf Lemma 1}  (\cite{M1}) {\it The spectral curve $\Gamma^{\natural}$ 
of $L_4^{\natural}, L_{4g+2}^{\natural}$ is given by the equation
$$
 w^2=F_g(z)=\frac{1}{4}\left(4A_0^2z-4A_0A_1\alpha_1-16A_2(\alpha_0+1)+
48A_4)+4\alpha_0A_1^2+4A_2^2-2A_1(6A_3-A_1)\right),
$$
where $A_j(z)$ are defined in (\ref{a1}).
}

\vspace{0.4cm}
\noindent {\bf Examples:}

\noindent {\bf 1)} $g=1$
$$
 F_1(z)=z^3+(\frac{1}{2}-2\alpha_0)z^2+\frac{1}{16}(1-8\alpha_0+16\alpha_0^2-16\alpha_1^2)z+\frac{\alpha_1^2}{4}.
$$
\noindent {\bf 2)} $g=2$, let for simplicity of formulas $\alpha_0=0$
$$
 F_2(z)=z^5+\frac{17}{2}z^4+\frac{1}{16}(321-336\alpha_1^2)z^3+
 \frac{1}{4}(34-531\alpha_1^2)z^2+
 (1-189\alpha_1^2+108\alpha_1^4)z+24\alpha_1^2+513\alpha_1^4.
$$
The spectral curves defined by the above equations are not singular
for the general parameters.

Mokhov \cite{Mokh2} found a remarkable change of variable
$$
 x=\ln (y+\sqrt{y^2-1})^r,\quad r=\pm 1,\pm 2,\dots,
$$
which reduces the operators $L_4^{\natural}, L_{4g+2}^{\natural}$
to the operators with polynomial coefficients. In particular,
$L_4^{\natural}$ in new variable $y$ gets the form
$$
 L_4^{\natural}=((1-y^2)\partial_y^2-3y\partial_y+aT_r(y)+b)^2-ar^2g(g+1)T_r(y),
 \quad a\ne 0,
$$
$b$ is arbitrary constant, $T_r(y)$ is the Chebyshev polynomial of
degree $|r|$. Recall that
$$
 T_0(y)=1,\quad T_1(y)=y,\quad T_r(y)=2yT_{r-1}(y)-T_{r-2}(y),\quad
 T_{-r}(y)=T_r(y).
$$
Chebyshev polynomials are commuting polynomials, i.e.
$$T_n(T_m(y))=T_m(T_n(y))=T_{n+m}(y).$$
If one applies the automorphism
$$\varphi(y)=-\partial_y,\qquad \varphi(\partial_y)=y,
\quad \varphi\in Aut(A_1)$$
to the  operators $L_4^{\natural},L_{4g+2}^{\natural}$ written in $y$
variable, then
one gets operators of orders $2r, (2g+1)r$ of rank $r$ \cite{Mokh2} and
$$
 \varphi(L_{4}^{\natural})=(aT_r(\partial_y)-y^2\partial_y^2-3y\partial_y+y^2+b)^2-
arg(g+1)T_r(\partial_y).
$$

\section{Proof of Theorems 1.1--1.3}

\subsection{Proof of theorem \ref{main1}}

Let us rewrite (\ref{**}) in the form
\begin{equation}\label{eq11}
4W_x^2V=-16 F_1(\frac{1}{2}(c_2-W))+W_{xx}^2-2W_xW_{xxx},
\end{equation}
 Note that from (\ref{eq11}) it follows
\begin{equation}
\label{*}
 -4F'_1(\frac{1}{2}(-c_2-W))+2V_xW_x+4VW_{xx}+W_{xxxx}=0.
\end{equation}
Further we assume that $V,W$ are polynomials
\begin{equation}
\label{eq13}
 V=\alpha_nx^n+\ldots +\alpha_0, \mbox{\quad} W=\beta_mx^m +\ldots +\beta_0,
 \quad \alpha_n\neq 0,\ \beta_m\neq 0.
\end{equation}
Equation (\ref{eq11}) is equivalent to the system of equations: equation
(\ref{*}) and the equation on free terms of (\ref{eq11})
which is
\begin{equation}
\label{cubic}
\alpha_0\beta_1^2=-4F_1(\frac{1}{2}(c_2-\beta_0))+\beta_2^2-3\beta_1\beta_3.
\end{equation}
Let us prove the following important proposition.

\begin{prop}
\label{main}
For any $m>0$ there exists a solution
of the equation \eqref{eq11} of the form (\ref{eq13}), where $n=m+2$.
\end{prop}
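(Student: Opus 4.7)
The plan is to view (\ref{*}) as a polynomial identity that recursively determines the coefficients of $V$ from those of $W$, and then to argue that the residual constraints on the coefficients of $W$ (from the lower-level equations of (\ref{*}) together with (\ref{cubic})) are consistent.

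First I would substitute the ansatz (\ref{eq13}) into (\ref{*}) and match coefficients of $x^{2m-k}$ for $k=0,1,\ldots,2m$. A direct computation shows that for $m\ge 2$ the coefficient of the ``new'' unknown $\alpha_{m+2-k}$ at level $l=2m-k$ equals $2m(3m-k)\beta_m$, which is nonzero for every $k=0,1,\ldots,m+2$ since $3m-k\ge 2m-2>0$ and $\beta_m\ne 0$. Hence the top level forces $\alpha_{m+2}=\beta_m/(2m^2)\ne 0$, so $V$ has degree exactly $m+2$, and the subsequent levels determine $\alpha_{m+1},\alpha_m,\ldots,\alpha_0$ sequentially as explicit rational functions of $\beta_0,\ldots,\beta_m$ (with powers of $\beta_m$ in the denominators). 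The case $m=1$ is a minor variant in which the recursion pins down only $\alpha_3,\alpha_2,\alpha_1$ and leaves $\alpha_0$ to be fixed by (\ref{cubic}).

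After all $\alpha_i$'s are eliminated, the remaining equations of (\ref{*}) at the lowest levels $l=0,1,\ldots,m-3$ (empty for $m\le 2$) together with (\ref{cubic}) become $m-1$ polynomial conditions on the $m+1$ free parameters $\beta_0,\ldots,\beta_m$. A cleaner equivalent formulation is obtained by integrating (\ref{*}) once: since $\tfrac{d}{dx}\bigl(4VW_x^2-(\text{RHS of (\ref{eq11})})\bigr)=2W_x\cdot(\text{LHS of (\ref{*})})$, requiring (\ref{*}) to hold identically is the same as requiring $4VW_x^2-(\text{RHS})$ to be a constant, which (\ref{cubic}) then forces to vanish. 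The whole problem thus reduces to exhibiting a polynomial $W$ of degree $m$ with $\beta_m\ne 0$ such that $W_x^2$ divides $-16F_1\bigl(\tfrac12(-c_2-W)\bigr)+W_{xx}^2-2W_xW_{xxx}$ in $\mathbb{C}[x]$; at any simple root $r$ of $W_x$ the derivative of this numerator automatically vanishes (it carries the overall factor $2W_x$), so divisibility by $W_x^2$ reduces to the pointwise matching conditions $W_{xx}(r)^2=16F_1\bigl(\tfrac12(-c_2-W(r))\bigr)$ at the $m-1$ critical points of $W$.

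The step I expect to be the main obstacle is proving that such a $W$ exists for every $m\ge 1$ and every choice of $c_0,c_1,c_2$. I would attack it by induction on $m$: the base cases $m=1,2$ are elementary (for $m=1$ any $W=\beta_1 x+\beta_0$ works once $\alpha_0$ is chosen to satisfy (\ref{cubic}); for $m=2$, after translating to kill $\beta_1$, the single matching condition becomes $\beta_2^2=4F_1(\tfrac12(-c_2-\beta_0))$, solvable for generic $\beta_0$), and for the inductive step I would perturb a degree-$m$ solution into a degree-$(m+1)$ one, using the two-parameter family of degree-$m$ solutions to absorb the single new matching condition via the implicit function theorem. The technical crux is then showing that the Jacobian of the matching conditions with respect to the free parameters is nondegenerate at a convenient starting point, which I would verify by an explicit computation in a symmetric low-parameter ansatz (for instance $W=\beta_3 x^3+\beta_1 x+\beta_0$ in the next case $m=3$, where the paired critical points $\pm s$ reduce the two matching equations to one via symmetry plus an odd condition involving $F_1'$).
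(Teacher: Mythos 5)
Your initial reductions are correct and in fact cleaner than the paper's setup: matching coefficients in \eqref{*} from the top down, the unknown $\alpha_{m+2-k}$ enters the level-$(2m-k)$ equation with coefficient $2m(3m-k)\beta_m$, which is indeed nonzero for $k\le m+2$ when $m\ge 2$ (and the top level gives $\alpha_{m+2}=\beta_m/(2m^2)\neq 0$, so $\deg V=m+2$ comes for free). The equivalent reformulation --- find $W$ of degree $m$ with $W_x^2$ dividing $-16F_1\bigl(\tfrac12(-c_2-W)\bigr)+W_{xx}^2-2W_xW_{xxx}$, which at simple critical points reduces to the $m-1$ pointwise conditions $W_{xx}(r)^2=16F_1\bigl(\tfrac12(-c_2-W(r))\bigr)$ --- is also sound. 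But these reductions only restate the problem; the content of the proposition is precisely the existence, for every $m$ and every $F_1$, of a $W$ satisfying those $m-1$ conditions, and this you have not proved. You defer it to an induction on $m$ whose crucial step (nondegeneracy of the Jacobian of the matching conditions) is only proposed to be ``verified by an explicit computation'' in a symmetric ansatz for $m=3$; that does not establish it for general $m$. Moreover the inductive step itself is problematic as described: passing from degree $m$ to degree $m+1$ by perturbation is a singular limit, since as $\beta_{m+1}\to 0$ the new critical point of $W$ escapes to infinity and the new matching condition degenerates, so a naive implicit function theorem argument does not apply. You would also need to know that the degree-$m$ solution set is genuinely a two-parameter family transverse to the new condition, which is again the unproved existence/dimension statement.

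The paper avoids constructing solutions altogether. It homogenizes the $2m+1$ quadratic equations coming from \eqref{*} in $\dpp^{2m+4}$, invokes the projective dimension theorem to conclude every component of the intersection $X$ has dimension at least $3$ and every component of $X\cap Z$ (the part at infinity) at least $2$, and then spends the bulk of the proof on an explicit Jacobian rank computation showing that each component of $X\cap Z$ through the distinguished point $P=(\alpha_n:\beta_m:0:\cdots:0)$ has dimension exactly $2$; hence the component of $X$ through $P$ cannot lie entirely at infinity and must contain affine points. So both approaches ultimately hinge on a Jacobian computation, but the paper's is carried out completely and at a single, explicitly chosen locus at infinity, whereas yours is postulated at an unspecified affine solution that is itself what needs to be shown to exist. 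As it stands the proposal has a genuine gap at its central step.
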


\begin{proof}
Equation \eqref{*} is equivalent to a system of $2m+1$ equations in $2m+4$
variables $\alpha_i,\beta_j$. Note that all equations have degree 2 and the
set of their solutions consists of points in ${\mathbb C}^{2m+4}$
(with coordinates $\alpha_i,\beta_j$) which lie in the intersection of
$2m+1$ quadrics defined by these equations. By \cite[Ch.1,Th.7.2]{Ha}
the intersection $X$ of these quadrics in $\dpp^{2m+4}$ (with homogeneous
coordinates $\alpha_i,\beta_j,u$) is non-empty and each its irreducible
component has dimension greater or equal to 3. By the same reason the
intersection of $X$ with the hyperplane $Z=\{u=0\}$ at infinity is non-empty
and each its irreducible component has dimension greater or equal to 2.

To prove the proposition it is sufficient to prove that for
any fixed $m>0$  there is a two-dimensional irreducible component of
$X\cap Z$. From this fact we can conclude that affine part of the intersection
of quadrics is non-empty.

The homogeneous parts  of our equations in $\dpp^{2m+4}$ not depending on $u$ can be easily written: these are exactly the coefficients at $x^i$ of the sum
\begin{equation}
\label{triv1}
4VW_{xx}+2V_xW_x-3W^2.
\end{equation}
Let us introduce the following notations:
$$
V_xW_x=\sum_{i=0}^{m+n-2}b_ix^i, \mbox{\quad} VW_{xx}=\sum_{i=0}^{m+n-2}c_ix^i,
\mbox{\quad} W^2=\sum_{i=0}^{m+n-2}d_ix^i.
$$
Then the intersection $X\cap Z$ is given by the equations
\begin{equation}
\label{triv2}
4c_i+2b_i-3d_i=0, \mbox{\quad} i=0,\ldots, 2m.
\end{equation}
Note that the coefficients $b_i,c_i,d_i$ can be written in the following form:
\begin{equation}
\label{triv3}
d_i=\sum_{k=0}^i\beta_{i-k}\beta_k, \mbox{\quad} b_i=\sum_{k=0}^iB_{k,i}\alpha_{i-k+1}\beta_{k+1}, \mbox{\quad} c_i=\sum_{k=0}^i C_{k,i}\alpha_{i-k}\beta_{k+2},
\end{equation}
where $B_{k,i}=(k+1)(i-k+1),C_{k,i}=(k+1)(k+2)$ are positive integers, and we set $\alpha_j\equiv 0$ if $j>n$, $\beta_j\equiv 0$ if $j>m$.

The next observation is: equations \eqref{triv2}, \eqref{triv3} always have
a solution of the form
$$
 P=(\alpha_n\ne 0:\beta_m\neq 0:0: \ldots :0)
$$ for
any $m>0$. Indeed, if
$\alpha_0=\ldots =\alpha_{n-1}=\beta_0=\ldots =\beta_{m-1}=0$, then only
$2m$-th equation from \eqref{triv2} remains to be non-trivial, and this
equation becomes a quadratic homogeneous equation linear in $\alpha_n$ and
quadratic in $\beta_m$:
$$
(2B_{m-1,2m}+4C_{m-2,2m})\alpha_n\beta_m-3\beta_m^2=0.
$$
Thus, we can set $\beta_m=1$ where from $\alpha_n=3/(2B_{m-1,2m}+4C_{m-2,2m})$.

Let us prove that for
any fixed $m>0$  any irreducible component of $X\cap Z$ containing
$P$ has dimension 2.

If $m=1$ then there are only 3 equations in \eqref{triv2}
$$
 4C_{0,0}\alpha_0\beta_2+2B_{0,0}\alpha_1\beta_2-3\beta_0^2=0,
$$
$$
 4(C_{0,1}\alpha_1\beta_2+C_{1,2}\alpha_0\beta_3)+
 2(B_{0,1}\alpha_2\beta_1+B_{1,1}\alpha_1\beta_2)-6\beta_0\beta_1=0,
$$
$$
 4(C_{0,2}\alpha_2\beta_2+C_{1,2}\alpha_1\beta_3+C_{2,2}\alpha_0\beta_4)+
 2(B_{0,2}\alpha_3\beta_1+B_{1,2}\alpha_2\beta_2+B_{2,2}\alpha_1\beta_3)-
 3(2\beta_0\beta_2+\beta_1^2)=0,
$$
and their Jacobi matrix at $P$ has the following form:
$$
\left (
\begin{array}{cccccc}
* & 2B_{0,0}\beta_1 & 0 &0 &*&*\\
*&*& 2B_{0,1}\beta_1 &0& *&*\\
*&*&*& 2B_{0,2}\beta_1 &*&*\\
\end{array}
\right ),
$$
where the first columns denote derivations with respect to
$\alpha_0,\dots,\alpha_3$, and the last two columns denote derivations with
respect to $\beta_0,\beta_1$. The rank of the matrix is 3, so, these
equations define a smooth variety in the neighbourhood of the point $P$ of
dimension two.

For generic $m$ the point $P$ might not be regular. Nevertheless, any irreducible component containing $P$ has a dense subset of smooth points. At any such point $Q$ the Jacobi matrix $J$ can be written in the following form. It can be divided in two blocks: one consists of $m+1$ columns (derivations of equations with respect to  $\beta_0,\ldots ,\beta_m$), and another one consists of $n+1$ columns (derivations of equations with respect to $\alpha_n,\ldots , \alpha_0$).
We shall  describe only essential columns for us.

The columns of the first block are (to save the space we shall
write them as rows):
$$
\mbox{2-nd column:\quad }(j_{0,0}\alpha_1, (j_{0,1}\alpha_2-6\beta_0), (j_{0,2}\alpha_3-6\beta_1), \ldots , (j_{0,n-1}\alpha_{n}-6\beta_{m}), 0,\ldots ,0)
$$
$$
\mbox{3-d column:\quad } (j_{1,0}\alpha_0, j_{1,1}\alpha_1, (j_{1,2}\alpha_2-6\beta_0), \ldots, (j_{1,n}\alpha_{n}-6\beta_{m}), 0,\ldots ,0)
$$
$$
\mbox{4-th column:\quad } (0, j_{2,1}\alpha_0, j_{2,2}\alpha_1, (j_{2,3}\alpha_2-6\beta_0), \ldots , (j_{2,n+1}\alpha_{n}-6\beta_{m}), 0,\ldots ,0)
$$
$$
\mbox{5-th column:\quad } (0,0, j_{3,2}\alpha_0, j_{3,3}\alpha_1, (j_{3,4}\alpha_2-6\beta_0), \ldots , (j_{3,n+2}\alpha_{n}-6\beta_{m}), 0,\ldots ,0)
$$
$$
\ldots\qquad \qquad\ldots\qquad\qquad\ldots
$$
$$
\mbox{(m+1)-th column:\quad } (0,\ldots , 0, j_{m-1,m-2}\alpha_0, j_{m-1,m-1}\alpha_1, (j_{m-1,m}\alpha_2-6\beta_0), \ldots , (j_{m-1,2m}\alpha_{n}-6\beta_{m}));
$$
the columns of the second block are:
$$
\mbox{1-st column:\quad } (0,\ldots , 0, j_{0,m+1}\beta_1, j_{1,m+2}\beta_2, \ldots , j_{m-1,2m}\beta_m)
$$
$$
\mbox{2-nd column:\quad } (0,\ldots , 0, j_{0,m}\beta_1, j_{1,m+1}\beta_2, \ldots , j_{m-1,2m-1}\beta_m,0)
$$
$$
\mbox{3-d column:\quad } (0,\ldots , 0, j_{0,m-1}\beta_1, j_{1,m}\beta_2, \ldots , j_{m-1,2m-2}\beta_m,0,0)
$$
$$
\ldots\qquad \qquad\ldots\qquad\qquad\ldots
$$
$$
\mbox{n-th column:\quad } (j_{0,0}\beta_1, j_{1,1}\beta_2, \ldots , j_{m-1,m-1}\beta_m,0,\ldots ,0)
$$
$$
\mbox{(n+1)-th column:\quad } (j_{1,0}\beta_2, \ldots , j_{m-1,m-2}\beta_m,0,\ldots ,0),
$$
where the numbers $j_{k,l}$ are defined as
$$j_{k,l}=2B_{k,l}+4C_{k-1,l},$$
where we assume $B_{k,l}=0$ if $k>l$ and $C_{k,l}=0$ if $k<0$.

Without loss of generality we can assume that the point $Q$ belongs to a sufficiently small neighbourhood of the point $P$ (in the complex topology), such that, for fixed numbers $j_{k,l}$, the modules of all terms of the matrix $J$, except the terms containing $\beta_m=1$ and $\alpha_n$, are comparable with some $0<\epsilon \ll 1$ (i.e. they are $<\epsilon$ but
$>\epsilon^2$). We call such terms comparable with $\epsilon$.

We have the following possibilities now. If there is a smooth point $Q$ such that its  coordinate $\alpha_0\neq 0$ or $\alpha_1\neq 0$, then the rank of the matrix $J$ is $2m+1$, i.e. the dimension of the component is two. Indeed, we can first apply the Gauss elimination algorithm to kill all terms of the right part of the matrix lying over terms containing $\beta_m$. We can choose $\epsilon$ small enough such that the terms of the left part of $J$ will change, but the top non-zero elements of the first $m-2$ rows will remain non-zero and comparable with $\epsilon$, and all elements over them will be comparable with $\epsilon^2$. Applying again the Gauss elimination algorithm we can kill all elements in the columns except these top non-zero elements, thus obtaining $2m+1$ linearly independent rows in the matrix $J$.

Note that the case $\alpha_0=0$, $\alpha_1\neq 0$ (i.e. $\alpha_0=0$ for all smooth points) is in fact impossible: in this case the whole component belongs to the hyperplane $\alpha_0=0$. But then the dimension of the component must be 1, a contradiction.

Now we claim that there {\it exists} a smooth point such that $\alpha_0\neq 0$ or $\alpha_1\neq 0$. Indeed,
if there are no such smooth points, then the whole component belongs to the intersection of hyperplanes $\alpha_0=\alpha_1=0$ (cf. \cite[Ch.1,ex.1.6]{Ha}). Note that in this case from 0-th equation in \eqref{triv2} it follows $\beta_0=0$, and from the $1$-st equation it follows $\alpha_2\beta_1=0$.

Let's show first that $\alpha_2=\beta_1=0$.
If there is a smooth point in the component with $\alpha_2\neq 0$, then the Jacobi matrix of our system restricted to the $2m$-dimensional intersection of hyperplanes $\alpha_0=\alpha_1=\beta_0=0$ reduces to the following matrix.

The columns of the first block are (to save the space we will again write them as rows):
$$
\mbox{1-st column:\quad }(j_{0,1}\alpha_2, (j_{0,2}\alpha_3-6\beta_1),  \ldots , (j_{0,n-1}\alpha_{n}-6\beta_{m}), 0,\ldots ,0)
$$
$$
\mbox{2-nd column:\quad } (0, j_{1,2}\alpha_2, (j_{1,3}\alpha_3-6\beta_1), \ldots, (j_{1,n}\alpha_{n}-6\beta_{m}), 0,\ldots ,0)
$$
$$
\mbox{3-d column:\quad } (0, 0, j_{2,3}\alpha_2,  \ldots , (j_{2,n+1}\alpha_{n}-6\beta_{m}), 0,\ldots ,0)
$$
$$
\ldots\qquad \qquad\ldots\qquad\qquad\ldots\ldots
$$
$$
\mbox{m-th column:\quad } (0,\ldots , 0, j_{m-1,m}\alpha_2, \ldots , (j_{m-1,2m}\alpha_{n}-6\beta_{m}));
$$
the columns of the second block are:
$$
\mbox{1-st column:\quad } (0,\ldots , 0, j_{0,m+1}\beta_1, j_{1,m+2}\beta_2, \ldots , j_{m-1,2m}\beta_m)
$$
$$
\mbox{2-nd column:\quad } (0,\ldots , 0, j_{0,m}\beta_1, j_{1,m+1}\beta_2, \ldots , j_{m-1,2m-1}\beta_m,0)
$$
$$
\mbox{3-d column:\quad } (0,\ldots , 0, j_{0,m-1}\beta_1, j_{1,m}\beta_2, \ldots , j_{m-1,2m-2}\beta_m,0,0)
$$
$$
\ldots\qquad \qquad\ldots\qquad\qquad\ldots
$$
$$
\mbox{(n-1)-th column:\quad } (j_{0,1}\beta_1, \ldots , j_{m-1,m}\beta_m,0,\ldots ,0),
$$
where the  $m$ columns of the first block denote derivations with respect to $\beta_1,\ldots ,\beta_m$, and the $n-1$ columns of the second block denote derivations with respect to $\alpha_n,\ldots , \alpha_2$.

Since $\beta_1$ must be equal to zero, we can apply the same arguments as above and obtain that the rank of this matrix is $2m$. But this is impossible, because the dimension of the component is not less than two.

If the whole component belongs to the intersection $Y=\{\alpha_0=\alpha_1=\alpha_2=\beta_0=0\}$, but there are smooth points with $\beta_1\neq 0$, then the $2$-th equation in \eqref{triv2} reduces to $2B_{0,2}\alpha_3\beta_1 -3\beta_1^2=0$, where from we see that $\alpha_3=3\beta_1/(2B_{0,2})\neq 0$. In this case analogously to the previous case the matrix $J$ reduces to the following matrix.

The columns of the first block are:
$$
\mbox{1-st column:\quad }((j_{0,2}\alpha_3-6\beta_1),  \ldots , (j_{0,n-1}\alpha_{n}-6\beta_{m}), 0,\ldots ,0)
$$
$$
\mbox{2-nd column:\quad } (0, (j_{1,3}\alpha_3-6\beta_1), \ldots, (j_{1,n}\alpha_{n}-6\beta_{m}), 0,\ldots ,0)
$$
$$
\ldots\qquad \qquad\ldots\qquad\qquad\ldots\ldots
$$
$$
\mbox{m-th column:\quad } (0,\ldots , 0, (j_{m-1,m+1}\alpha_3-6\beta_1), \ldots , (j_{m-1,2m}\alpha_{n}-6\beta_{m}));
$$
the columns of the second block are:
$$
\mbox{1-st column:\quad } (0,\ldots , 0, j_{0,m+1}\beta_1, j_{1,m+2}\beta_2, \ldots , j_{m-1,2m}\beta_m)
$$
$$
\mbox{2-nd column:\quad } (0,\ldots , 0, j_{0,m}\beta_1, j_{1,m+1}\beta_2, \ldots , j_{m-1,2m-1}\beta_m,0)
$$
$$
\mbox{3-d column:\quad } (0,\ldots , 0, j_{0,m-1}\beta_1, j_{1,m}\beta_2, \ldots , j_{m-1,2m-2}\beta_m,0,0)
$$
$$
\ldots\qquad \qquad\ldots\qquad\qquad\ldots
$$
$$
\mbox{(n-2)-th column:\quad } (j_{0,2}\beta_1, \ldots , j_{m-1,m+1}\beta_m,0,\ldots ,0),
$$
where the  $m$ columns of the first block denote derivations with respect to $\beta_1,\ldots ,\beta_m$, and the $n-1$ columns of the second block denote derivations with respect to $\alpha_n,\ldots , \alpha_3$.

Now the situation differs from the first main case.
If we apply the Gauss elimination algorithm to kill all terms of the right part of the matrix lying over terms containing $\beta_m$, we can destroy the top non-zero terms. So, we must control the changes of these terms modulo $\epsilon^2$. Fortunately, it is not difficult: the term $j_{m-1-k,n-1-k}\alpha_3-6\beta_1$, where $0\le k\le m-1$, will be changed to the term
$$
j_{m-1-k,n-1-k}\alpha_3-6\beta_1 -\frac{j_{0,m+1-k}}{j_{m-1,2m-k}}(j_{m-1-k,2m-k}\alpha_n-6)\beta_1=
$$
$$
\frac{(-1+m)^2 m \left(2+5 m+2 m^2\right)+2 k^2 \left(-1+m^3\right)+k \left(4+m-m^3-4 m^4\right)}{m^3}.
$$
As it can be easily checked, the numerator can be equal to zero only for $k>m-1$. Thus, the rank of $J$ is equal to $2m-1=\dim Y$, a contradiction.

Now we can use the induction: suppose we have proved that
the whole component belongs to the intersection $Y=\{\alpha_0=\ldots =\alpha_{l-1}=0=\beta_0=\ldots =\beta_{l-2}\}$. Then the $2(l-1)-1$-th equation in  \eqref{triv2} implies $\alpha_l\beta_{l-1}=0$. If there is a smooth point with $\alpha_l\neq 0$, then we can apply the arguments from the first main case to show that the matrix $J$ has the maximal rank equal to the dimension of $Y$, a contradiction. If there is a smooth point with $\beta_{l-1}\neq 0$, then from $2(l-1)$-th equation we get
$$
\alpha_{l+1}=\frac{3}{j_{l-2,2(l-1)}}\beta_{l-1},
$$
and, analogously to the case $\alpha_2=0$, $\beta_1\neq 0$, we can control  the changes of the top non-zero terms $(j_{m-1-k,m-1-k+l}\alpha_{l+1}-6\beta_{l-1})$, $l\le m$. They will be changed to the terms
$$
j_{m-1-k,m-1-k+l}\alpha_{l+1}-6\beta_{l-1}-\frac{j_{l-2,m-1-k+l}}{j_{m-1,2m-k}}(j_{m-1-k,2m-k}\alpha_n-6)\beta_{l-1}=
$$
$$
\frac{1}{(-1+l)^2 m^3}(-1+l-m) (-4 k-2 k^2+12 k l+4 k^2 l-12 k l^2-2 k^2 l^2+4 k l^3-2 m+5 k m+2 k^2 m+6 l m
$$
$$
-10 k l m-2 k^2 l m-6 l^2 m+5 k l^2 m+2 l^3 m+
$$
$$
3 m^2-5 k m^2-2 k^2 m^2-6 l m^2+5 k l m^2+3 l^2 m^2+3 m^3+4 k m^3-3 l m^3-2 m^4).
$$
 The last expression is equal to zero only for $k=-2+2 l+m >m$ or
 $$
k= m\frac{-1+2 l -l^2 +m-l m+2 m^2}{2 \left(1-2 l+l^2-m+l m+m^2\right)}.
 $$
But the last expression can not be integer. Indeed, the great common divisor of $m$ and $\left(1-2 l+l^2-m+l m+m^2\right)$ must divide also the numerator, i.e. the doubled fraction must be integer. On the other hand, it is clear that the fraction is positive and less than one. It also easy to check that it can not be equal to $1/2$.

At the end we obtain that the whole component belongs to the intersection $Y=\{\alpha_0=\ldots =\alpha_{m}=0=\beta_0=\ldots =\beta_{m-1}\}$ with $\dim Y=2$. Then from $(2m-1)$-th equation we obtain $\alpha_{m+1}\equiv 0$, i.e. the component lies in $Y\cap\{\alpha_{m+1}= 0\}$, whose dimension is one, a contradiction.
\end{proof}

Let us prove Theorem 1.1
 The intersection $X'$ (in $\dpp^{2m+4}$) of $X$ from proposition \ref{main}
 and the cubic defined by (\ref{cubic}) is again non-empty, and each
 its irreducible
 component has dimension greater or equal to 2; the intersection
 $X'\cap Z$ with $Z$ is non-empty and each its irreducible component has
 dimension greater or equal to 1. The homogeneous part of (\ref{cubic})
 not depending on $u$ is
\begin{equation}
\label{***}
\alpha_0\beta_1^2+\beta_0^3/2.
\end{equation}
It also has a solution of the form $P$ from proposition \ref{main}.

To prove Theorem 1.1  it is sufficient to prove that for any fixed $m>0$
any irreducible component of $X'\cap Z$ containing $P$ has dimension 1.

Note that if $\alpha_0\neq 0$, then either $\beta_1$ or $\beta_0$ is not
equal to 0. Indeed, if $\beta_0=\beta_1=0$, then from 0-th equation it
follows that $\beta_2=0$, from 1-st equation it follows that $\beta_3=0$
and, by iteration, $\beta_m=0$, a contradiction.

Let $Q$ be a smooth point on some irreducible component of $X'\cap Z$ as in the proof of proposition \ref{main}. Consider the new Jacobi matrix with the first row consisting of partial derivatives of the equation (\ref{***}):
$$
(3\beta_0^2, 2\alpha_0\beta_1,0, \ldots, 0, \beta_1^2).
$$
If $\alpha_0\neq 0$, then it's easy to see that this row and all other rows of the old matrix $J$ are linearly independent, i.e. the dimension of the component is one.

If $\alpha_0=0$, we can literally repeat the arguments from the proof of proposition \ref{main}. Indeed, as we have already seen, in this case even an irreducible component of $X\cap Z$ would be of dimension less or equal to 1. Theorem 1.1 is proved.

\subsection{Proof of Theorems 1.2, 1.3}

According to Theorem 1.1 an arbitrary equation
$$
 Y^2=X^3+c_2X^2+c_1X+c_0,\quad X,Y\in A_1
$$
has infinitely many solutions of the form
$L_{4,m}=(\partial_x^2+V_m(x))^2+W_m(x),\ L_{6,m},$ and the equation
$$
 Y^2=X^{2g+1}+c_{2g}^{\natural}X^{2g}+
 \dots+c_{1}^{\natural}X+c_0^{\natural},\quad X,Y\in A_1
$$
also has infinitely many solutions of the form
$\varphi^{\natural}(L^{\natural}_4),\ \varphi^{\natural}(L^{\natural}_{4g+2}),$
where
$$
 L(r)=\varphi^{\natural}(L^{\natural}_4)=
 ((1-y^2)\partial_y^2-3y\partial_y+aT_r(y)+b)^2-ar^2g(g+1)T_r(y),
$$
$T_r(y)$ is the Chebyshev polynomial of degree $|r|$ (see section 2).
To prove Theorem 1.2 and Theorem 1.3 it is enough to prove that at
$r>10$ and $r\ne r_1$
$$
 \varphi(L_{4,r})\ne L_{4,r_1},\qquad
 \varphi(L(r))\ne L(r_1),
$$
for arbitrary $\varphi\in Aut(A_1)$. This facts follow from the
following lemma.

\begin{lemma}
\label{automs}
Consider a family of operators of order four with polynomial coefficients
$$
 L(r)=(a(x)\partial_x^2+b(x)\partial_x+c_r(x))^2+d_r(x), \mbox{\quad }
 r\in \dn,
$$
where $a(x), b(x)$ are polynomials of fixed degree such that
$$
 {\rm deg} a(x)>{\rm deg} b(x),\quad {\rm deg} c_r(x)=r,
 \quad r\geq {\rm deg}  d_r(x).
$$
If $r>{\rm deg} a(x)+8$, then
$$
 \varphi(L(r))\ne L(r_1)
$$
at $r\ne r_1$ for arbitrary $\varphi\in Aut(A_1)$.
 \end{lemma}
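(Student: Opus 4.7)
The plan is to argue by contradiction: assume $\varphi\in Aut(A_1)$ satisfies $\varphi(L(r))=L(r_1)$ with $r\ne r_1$ and $r,r_1>\deg a+8$, and derive a contradiction by case analysis on the tame generators $\varphi_1,\varphi_2,\varphi_3$ of $Aut(A_1)$ introduced at the beginning of the paper.

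The starting observation is that the Bernstein degree of $L(r)$, defined by assigning $\deg x=\deg\partial_x=1$, equals $2r$. Indeed, the constant-in-$\partial_x$ part of $L(r)$ contains the term $c_r(x)^2$ of $x$-degree $2r$, while the top-order contribution $a(x)^2\partial_x^4$ has Bernstein degree $2\deg a+4<2r$ by the hypothesis $r>\deg a+8$, and all other terms arising from $(a\partial_x^2+b\partial_x+c_r)^2+d_r$ have Bernstein degree strictly below $2r$ as well. In particular $L(r_1)$ has Bernstein degree $2r_1$, so if $\varphi$ preserves the Bernstein filtration we already get $2r=2r_1$, forcing $r=r_1$. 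Consequently $\varphi$ must involve a generator that does not preserve the Bernstein filtration: either $\varphi_2$ with $\deg P_1\geq 1$, or $\varphi_3$ with $\deg P_2\geq 1$, or $\varphi_1$ with a nonzero off-diagonal entry.

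I rule these out in turn. If $\varphi_2(x)=x+P_1(\partial_x)$ with $\deg P_1=D\ge 1$ is applied, then $c_r(x+P_1(\partial_x))$ has differential-operator order $rD$ with nonzero leading coefficient in $\partial_x$, so $\varphi_2(L(r))$ has order $2rD\ge 2r>4$; the slack $r>\deg a+8$ ensures that the contributions from $\varphi_2(a\partial_x^2)$, $\varphi_2(b\partial_x)$, $\varphi_2(d_r)$, of orders at most $(\deg a)D+2$, $(\deg b)D+1$, and $rD$ respectively, cannot cancel the top term of order $2rD$. An identical degree count, with $\partial_x$ replaced by $\alpha x+\beta\partial_x$, eliminates $\varphi_1$ with $\beta\ne 0$. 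For the order-preserving generator $\varphi_3(\partial_x)=\partial_x+P_2(x)$, a direct expansion yields
$$\varphi_3(a\partial_x^2+b\partial_x+c_r)=a\partial_x^2+(2aP_2+b)\partial_x+(aP_2^2+aP_2'+bP_2+c_r),$$
so the $\partial_x^3$-coefficient of $\varphi_3(L(r))$ becomes $2aa'+2a(2aP_2+b)=2aa'+2ab+4a^2P_2$; comparing with the $\partial_x^3$-coefficient $2aa'+2ab$ of $L(r_1)$ forces $4a^2P_2=0$, hence $P_2=0$. Combining these three steps, $\varphi$ must in fact preserve the Bernstein filtration, contradicting the previous paragraph.

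The main obstacle is that these arguments dispose of a single elementary factor, while $\varphi$ is in general a composition. To handle compositions rigorously I would invoke the Czerniakiewicz--Makar-Limanov amalgamated-free-product normal form of $Aut(A_1)$, which writes $\varphi$ as an alternating product of affine and triangular blocks, and induct on the length of this normal form: the outermost non-affine factor's effect on the top Bernstein-degree term (or on the $\partial_x^3$-coefficient) cannot be undone by any subsequent affine factor, so the dominant-term estimates above propagate through the reduction. The specific bound $r>\deg a+8$ is calibrated so that the degree inequalities in these estimates survive with slack through the induction.
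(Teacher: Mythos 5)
Your argument for a \emph{single} elementary generator is essentially sound: the Bernstein-degree count showing $\mathrm{Bdeg}\,L(r)=2r$, the order count ruling out one application of $\varphi_2$ or of $\varphi_1$ with $\beta\neq 0$, and the comparison of $\partial_x^3$-coefficients forcing $P_2=0$ for $\varphi_3$ are all correct computations. The genuine gap is exactly where you flag "the main obstacle": a general $\varphi\in Aut(A_1)$ is a composition, and your reduction of the general case to the single-generator case is only asserted, not proved. The claim that "the outermost non-affine factor's effect on the top Bernstein-degree term cannot be undone by any subsequent affine factor, so the dominant-term estimates propagate through the induction" is precisely the hard content of the lemma, and it is not automatic: in a reduced word in the amalgamated free product the intermediate images of $L(r)$ have large order and large Bernstein degree simultaneously, leading terms of successive factors interact, and controlling cancellation along the whole word amounts to re-deriving the Makar-Limanov/Dixmier degree theory for reduced words. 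Nothing in your sketch quantifies how the bound $r>\deg a(x)+8$ "survives with slack through the induction"; as written this is a plan, not a proof.

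For comparison, the paper avoids the generator decomposition entirely. It writes $\varphi(x)=q_n(x)\partial_x^n+\dots+q_0(x)$ and $\varphi(\partial_x)=p_m(x)\partial_x^m+\dots+p_0(x)$ for an \emph{arbitrary} $\varphi$, disposes of $n=0$ by direct computation (forcing $\varphi$ affine), and for $n\geq 1$ first derives $n\deg a+2m=rn$ from the requirement that $\mathrm{ord}\,\varphi(L(r))=4$, then exploits the identity
\begin{equation*}
\varphi\bigl(\mathrm{ad}(-x)^3 L(r)\bigr)=\mathrm{ad}(-\varphi(x))^3\bigl(\varphi(L(r))\bigr),
\end{equation*}
using that $\mathrm{ad}(-x)^3 L(r)=24a^2\partial_x+\dots$ has computable order $2n\deg a+m$ after applying $\varphi$, while the right-hand side has order at most $3n+1$ because $\varphi(L(r))=L(r_1)$ has order $4$. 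The inequality $3n+1\geq n(r+3\deg a)/2$ then contradicts $r>\deg a+8$. This single order count handles all automorphisms at once, which is what your approach is missing. If you want to keep your route, you must either prove the propagation claim for reduced words (a nontrivial lemma in its own right) or replace it with an argument, like the paper's adjoint trick, that treats $\varphi$ globally.
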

Here we assume that ${\rm deg} b(x)=-\infty $ if $b(x)=0$.

\begin{proof}
Let us assume that there is $\varphi\in Aut(A_1)$ such that at
$r>{\rm deg} a(x)+8$ we have $\varphi(L(r))= L(r_1)$ for some $r\ne r_1$.
Let
$$
 \varphi(x)=q_n(x)\partial_x^n+\dots+q_0(x),\qquad
 \varphi(\partial_x)=p_m(x)\partial_x^m+\dots+p_0(x),
$$
where $q_j,p_s$ are some polynomials. First consider the case $n=0$. If $n=0$,
then $m=1$ otherwise the operator $\varphi(L(r))$ has order greater than four.
Further,
$$
 \varphi(a(x)\partial_x^2+b(x)\partial_x)=
 a(q_0(x))(p_1(x)\partial_x+p_0(x))^2+b(q_0(x))(p_1(x)\partial_x+p_0(x))=
$$
$$
 a(q_0(x))p_1^2(x)\partial_x^2+
 a(q_0(x))(p_1(x)p_1'(x)+p_0(x)+b(q_0(x))p_1(x))\partial_x+
$$
$$
 +a(q_0(x))p_0(x)+b(q_0(x))+b(q_0(x))p_0(x).
$$
From our assumption it follows that
$$
 a(q_0(x))p_1^2(x)=a(x),\qquad
 a(q_0(x))(p_1(x)p_1'(x)+p_0(x))+b(q_0(x))p_1(x)=b(x).
$$
Hence from the first identity we get that $p_1(x)$ is a constant and $q_0(x)$
 is a linear function. From the second identity we get that $p_0(x)$
 is a constant, otherwise the degree of the left hand side is greater than the degree
 of the right hand side.
Thus
$$
 \varphi(x)=s_1x+s_2,\qquad \varphi(\partial_x)=s_3\partial_x+s_4,\quad
 s_j\in{\mathbb C}.
$$
From this we obtain $\varphi(L(r))\ne L(r_1).$

Let us consider the general case $n\ne 0$. We have the following
identities for orders of differential operators
$$
 {\rm ord} \varphi(a(x)\partial_x^2)=n {\rm deg} a(x)+2m,\quad
 {\rm ord} \varphi(b(x)\partial_x)=n {\rm deg} b(x)+m,\quad
 {\rm ord} \varphi(c_r(x))=rn.
$$
Let us note that
$$
 {\rm ord} \varphi(a(x)\partial_x^2)= {\rm ord} \varphi(c_r(x)),
$$
for otherwise, since
${\rm ord} \varphi(a(x)\partial_x^2)> {\rm ord} \varphi(b(x)\partial_x)$
we have
$$
 {\rm ord} \varphi(a(x)\partial_x^2+b(x)\partial_x+c_r(x))=
 {\rm ord} \varphi(a(x)\partial_x^2+c_r(x))=
 \max\{rn, n\deg a(x)+2m\} \geq r,
$$
and therefore ${\rm ord}\varphi(L(r))\geq 2r>4,$ a contradiction.
 Thus,
\begin{equation}\label{eq19}
 n{\rm deg} a(x)+2m=rn.
\end{equation}
By direct calculations one can check that
$$
 {\rm ad}(-x)^3(L(r))=[[[L(r),x],x],x]=24a^2(x)\partial_x+12a(x)b(x)+12a(x)a'(x),
$$
hence
$$
 {\rm ord}\varphi({\rm ad}(-x)^3(L(r)))=2n{\rm deg}a(x)+m.
$$
On the other hand,
$$
 \varphi({\rm ad}(-x)^3(L(r)))={\rm ad}(-\varphi(x))^3(\varphi(L(r))).
$$
We have
$$
 {\rm ord}[\varphi(L(r)),\varphi(x)]\leq n+3,\qquad
 {\rm ord}[[\varphi(L(r)),\varphi(x)],\varphi(x)]\leq 2n+2,
$$
$$
 {\rm ord}[[[\varphi(L(r)),\varphi(x)],\varphi(x)],\varphi(x)]\leq 3n+1.
$$
Thus, using (\ref{eq19}) and our assumption $r>{\rm deg}a(x)+8$,
we get
$$
 3n+1\geq {\rm ord} [{\rm ad}(-\varphi(x))^3(\varphi(L(r)))]=
 2n{\rm deg} a(x)+m=n(r+3{\rm deg}a(x))/2>\frac{n}{2}(8+4{\rm deg}a(x))
 =
$$
$$
 4n+2n{\rm deg}a(x).
$$
We get a contradiction.
\end{proof}

Hence Theorems 1.2 and 1.3 are proved.

{\bf Acknowledgements}
The authors started to discuss the problems considered in this paper
on the conference ''Around Sato's Theory on Soliton Equations'' in
Tsuda College. They are grateful to this institution for the kind
hospitality and to Atsushi Nakayashiki for the invitation.

A.E. Mironov, Sobolev Institute of Mathematics and
Novosibirsk State University, Russia; e-mail: mironov@math.nsc.ru

A.B. Zheglov, Moscow State University, Russia; e-mail:  azheglov@mech.math.msu.su

\end{document}